\newcommand{\emptystr}{\varepsilon}
\newcommand{\cgen}{c}
\newcommand{\cint}{\hat{c}}
\newcommand{\expand}[1]{\langle#1\rangle}
\newcommand{\numtr}{m_{\mathsf{tr}}}
\newcommand{\litf}[2]{\ensuremath{\mathit{f}_{#1, #2}}}
\newcommand{\litp}[1]{\ensuremath{\mathit{p}_{#1}}}
\newcommand{\litrefa}[3]{\ensuremath{\mathit{ref}^{\mathrm{A}}_{#1 \leftarrow #2, #3}}}
\newcommand{\litrefb}[3]{\ensuremath{\mathit{ref}^{\mathrm{B}}_{#1 \leftarrow #2, #3}}}
\newcommand{\litrefc}[4]{\ensuremath{\mathit{ref}^{\mathrm{C}}_{#1, #2 \leftarrow #3, #4}}}
\newcommand{\litdref}[3]{\ensuremath{\mathit{dref}_{#1, #2\leftarrow#3}}}
\newcommand{\litdepth}[3]{\ensuremath{\mathit{depth}_{#1, #2, #3}}}
\newcommand{\litq}[2]{\ensuremath{\mathit{q}_{#1, #2}}}
\newcommand{\sub}[1]{\llbracket #1 \rrbracket}
\newcommand{\hbrk}[1]{[#1)}
\newcommand{\idtt}[1]{\ensuremath{\mathtt{#1}}}
\title{On the Smallest Size of Internal Collage Systems}
\author{Soichiro Migita}{Kyushu Institute of Technology, Japan}{migita.soichiro595@mail.kyutech.jp}{}{}
\author{Kyotaro Uehata}{Kyushu Institute of Technology, Japan}{}{}{}
\author{Tomohiro I}{Kyushu Institute of Technology, Japan}{tomohiro@ai.kyutech.ac.jp}{https://orcid.org/0000-0001-9106-6192}{KAKENHI 24K02899, JST AIP Acceleration Research JPMJCR24U4}
\titlerunning{On the Smallest Size of Internal Collage Systems}
\authorrunning{S. Migita, K. Uehata and T. I}
\keywords{Collage Systems; Dictionary-based compression; Compressibility measures}
\begin{document}
\maketitle              
\begin{abstract}
A Straight-Line Program (SLP) for a string $T$ is a context-free grammar in Chomsky normal form that derives $T$ only,
which can be seen as a compressed form of $T$.
Kida et al.\ introduced collage systems [Theor. Comput. Sci., 2003]
to generalize SLPs by adding repetition rules and truncation rules.
The smallest size $c(T)$ of collage systems for $T$ has gained attention
to see how these generalized rules improve the compression ability of SLPs.
Navarro et al. [IEEE Trans. Inf. Theory, 2021] showed that $c(T) \in O(z(T))$ and 
there is a string family with $c(T) \in \Omega(b(T) \log |T|)$, 
where $z(T)$ is the number of phrases in the Lempel-Ziv parsing of $T$
and $b(T)$ is the smallest size of bidirectional schemes for $T$.
They also introduced a subclass of collage systems, called internal collage systems,
and proved that its smallest size $\hat{c}(T)$ for $T$ is at least $b(T)$.
While $c(T) \le \hat{c}(T)$ is obvious, it is unknown how large $\hat{c}(T)$ is compared to $c(T)$.
In this paper, we prove that $\hat{c}(T) = \Theta(c(T))$ by showing that
any collage system of size $m$ can be transformed into an internal collage system of size $O(m)$ in $O(m^2)$ time.
Thanks to this result, we can focus on internal collage systems to study the asymptotic behavior of $c(T)$,
which helps to suppress excess use of truncation rules.
As a direct application, we get $b(T) = O(c(T))$, which answers an open question posed in [Navarro et al., IEEE Trans. Inf. Theory, 2021].
We also give a MAX-SAT formulation to compute $\hat{c}(T)$ for a given $T$.
\end{abstract}

\section{Introduction}\label{sec:intro}

Grammar-based compression is a framework of lossless compression whose outcomes are modeled by grammars like Context-free grammars (CFGs).
Grammar-based compression is popular because it is not only powerful to model the practical compressors such as LZ78~\cite{Ziv1978Coi} and RePair~\cite{1999LarssonM_OfflinDictionBasedCompr_DCC},
but also suitable to design algorithms working directly on compressed data~\cite{2012Lohrey_AlgorOnSlpComprStrin}.
In this paper, we focus on CFGs in Chomsky normal form that derive a single string, which are called \emph{Straight-Line Programs (SLPs)}~\cite{1995KarpinskiRS_PatterMatchForStrinWith_CPM}.
\footnote{The definition of SLPs varies across the literature. For example, SLPs are not necessarily in Chomsky normal form in~\cite{2012Lohrey_AlgorOnSlpComprStrin}.}

An SLP of size $m$ for a string $T$ has $m$ nonterminals
and the starting nonterminal is deterministically expanded to a unique string $T$ 
according to production rules of the form $X \rightarrow a$ or $X \rightarrow YZ$, where $X, Y, Z$ are nonterminals and $a$ is a terminal symbol.
The derived string from $X$, denoted by $\expand{X}$, is $\expand{X} = a$ for the former rules
and $\expand{X} = \expand{Y}\expand{Z}$ for the latter rules called the \emph{concatenation rules}.

Kida et al.~\cite{2003KidaMSTSA_CollagSystemUnifyFramewFor} introduced \emph{collage systems}
to generalize SLPs by adding repetition rules and truncation rules.
For a \emph{repetition rule} (a.k.a.\ run-length rule) $X \rightarrow Y^{r}$ with $r > 2$,
$X_i$ is expanded to the $r$-times repeat of $Y$ and we have $\expand{X} = \expand{Y}^r$.
For a \emph{truncation rule} $X \rightarrow Y[b..e)$ with $1 \le b < e \le |\expand{Y}| + 1$,
$\expand{X}$ is obtained by truncating $\expand{Y}$ to the substring $\expand{Y}[b..e)$, 
the substring of $\expand{Y}$ starting at $b$ and ending at $e-1$.
\footnote{In the original paper~\cite{2003KidaMSTSA_CollagSystemUnifyFramewFor}, the substring truncation is implemented by the prefix truncation and the suffix truncation.}
Collage systems with no truncation rules are also called \emph{Run-length SLPs (RLSLPs)}~\cite{2016NishimotoIIBT_FullyDynamDataStrucFor_MFCS}
and collage systems with no repetition rules are also called \emph{composition systems}~\cite{1996GasieniecKPR_EfficAlgorForLempelZip_SWAT}.
The size of a collage system is measured by the number of its nonterminals and 
let $\cgen(T)$ denote the smallest size of collage systems for a string $T$.

Recently, compressibility measures for highly-repetitive strings have been extensively studied~\cite{2021Navarro_IndexHighlRepetStrinCollec_I,2021Navarro_IndexHighlRepetStrinCollec_II}.
Although computing $\cgen(T)$ for a given string $T$ is NP-hard~\cite{2024KawamotoIKB_HardnOfSmallRlslpAnd_DCC},
it has gained attention as a compressibility measure 
to see how repetition and/or truncation rules improve the compression ability of SLPs.
Navarro et al.~\cite{2021NavarroOP_ApproxRatioOfOrderParsin} showed that $\cgen(T) \in O(z(T))$ and 
there is a string family with $\cgen(T) \in \Omega(b(T) \log |T|)$, 
where $z(T)$ is the size of LZ76 parsing~\cite{1976LempelZ_ComplOfFinitSequen_TIT} of $T$
and $b(T)$ is the smallest size of bidirectional schemes~\cite{1982StorerS_DataComprViaTexturSubst} for $T$.
They also introduced a subclass of collage systems, called \emph{internal collage systems}:
An internal collage system must have production rules so that, for every nonterminal $X$, 
the starting nonterminal can be expanded to get a sequence that contains $X$ without using truncation rules.
This restriction helps to suppress excess use of truncation rules.
In particular, a general collage system may have a nonterminal $X$ such that $\expand{X}$ is not a substring of $T$ but the substring of $\expand{X}$ is truncated and used to represent a part of $T$.
This is not the case in internal collage systems where every nonterminal $X$ can be reached from the starting nonterminal without truncation rules, implying that $\expand{X}$ appears in $T$ at least once.
Moreover, the restriction enables to define a parsing of $T$ of size linear to the size of an internal collage system
and prove its smallest size $\cint(T)$ is in $\Omega(b(T))$~\cite{2021NavarroOP_ApproxRatioOfOrderParsin}.

While $\cgen(T) \le \cint(T)$ is obvious, it is unknown how large $\cint(T)$ is compared to $\cgen(T)$.
In this paper, we prove that $\cint(T) = \Theta(\cgen(T))$ by showing that
any collage system of size $m$ can be transformed into an internal collage system of size $O(m)$ in $O(m^2)$ time.
Thanks to this result, we can focus on internal collage systems to study the asymptotic behavior of $\cgen(T)$.
As a direct application, we get $b(T) = O(\cgen(T))$, which answers an open question posed in~\cite{2021NavarroOP_ApproxRatioOfOrderParsin}.

Kawamoto et al.~\cite{2024KawamotoIKB_HardnOfSmallRlslpAnd_DCC} obtained a hardness result for computing $\cgen(T)$.
We just remark that exactly the same proof works to get the following result for $\cint(T)$ because truncation rules are not used to solve the reduced problem, and therefore, there are no difference between $\cgen(T)$ and $\cint(T)$ in the reduction.
\begin{theorem}
  The problem of computing $\cint(T)$ for a given string $T$ is NP-hard.
\end{theorem}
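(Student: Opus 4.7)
The plan is to recycle the NP-hardness reduction of Kawamoto et al.~\cite{2024KawamotoIKB_HardnOfSmallRlslpAnd_DCC} for $\cgen(T)$ and argue that it proves the same hardness for $\cint(T)$ without modification. First I would recall the reduction: it maps instances of some known NP-hard problem to pairs $(T, k)$ in polynomial time such that the source instance is a YES-instance if and only if $\cgen(T) \le k$. I would not need to redo any gadgetry; I would only need to inspect the structure of the strings $T$ produced by the reduction and the optimal collage systems used to witness the YES-case.

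The key observation is that the reduction is an \emph{RLSLP reduction}: in the YES-case the optimal collage system can be chosen to consist solely of concatenation and repetition rules, with no truncation rule at all. Since a collage system without truncation rules is trivially internal (every nonterminal is reached from the starting nonterminal using only the non-truncation rules present in the system), such a system is simultaneously a witness of $\cgen(T) \le k$ and of $\cint(T) \le k$. Together with the trivial inequality $\cgen(T) \le \cint(T)$, this gives $\cgen(T) = \cint(T)$ on all instances produced by the reduction.

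Consequently, the same polynomial-time reduction decides the source problem by asking whether $\cint(T) \le k$, establishing NP-hardness. The only step that requires care is the verification that, on the reduced instances, the optimal value of $\cgen(T)$ is already attained by a truncation-free system; but this is exactly the assertion made in~\cite{2024KawamotoIKB_HardnOfSmallRlslpAnd_DCC}, where the lower-bound analysis already counts nonterminals in a way that is insensitive to whether the rules are concatenations, repetitions, or truncations, so no new combinatorial argument is needed.
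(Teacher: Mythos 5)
Your proposal is correct and follows essentially the same route as the paper: reuse the reduction of Kawamoto et al., note that the YES-case witness is truncation-free and hence internal, and combine this with $\cgen(T) \le \cint(T)$ to conclude that the reduction also decides $\cint(T) \le k$. The paper states this only as a brief remark, so your write-up is if anything a more explicit version of the same argument.
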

In this paper, we give a MAX-SAT formulation to compute $\cint(T)$ for a given $T$,
extending the previous approaches for SLPs~\cite{2022BannaiGIKKN_ComputNpHardRepetMeasur_ESA} and RLSLPs~\cite{2024KawamotoIKB_HardnOfSmallRlslpAnd_DCC}.
Since we now know that $\cint(T) = \Theta(\cgen(T))$, this provides a tool to study the behavior of both $\cint(T)$ and $\cgen(T)$ through computing exact values of $\cint(T)$ for concrete strings (although we may be able to process only short strings in a realistic time).

\section{Preliminaries}\label{sec:prelim}

\subsection{Basic notation}
An integer interval $\{ i, i+1, \dots, j\}$ is denoted by $[i..j]$, 
where $[i..j]$ represents the empty interval if $i > j$.
Also, $[i..j)$ denotes $[i..j-1]$.

Let $\Sigma$ be an ordered finite \emph{alphabet}.
An element of $\Sigma^*$ is called a \emph{string} over $\Sigma$.
The length of a string $w$ is denoted by $|w|$. 
The empty string $\emptystr$ is the string of length 0,
that is, $|\emptystr| = 0$.
Let $\Sigma^+ = \Sigma^* - \{\emptystr\}$ and $\Sigma^k = \{ w \in \Sigma^* \mid |w| = k \}$ for any non-negative integer $k$.
The concatenation of two strings $x$ and $y$ is denoted by $x \cdot y$ or simply $xy$.
When a string $w$ is represented by the concatenation of strings $x$, $y$ and $z$ (i.e., $w = xyz$), 
then $x$, $y$ and $z$ are called a \emph{prefix}, \emph{substring}, and \emph{suffix} of $w$, respectively.
A substring $x$ of $w$ is called \emph{proper} if $x \neq w$.
For any string $w$ and non-negative integer $k$, let $w^k$ denote the $k$-times repeat of $w$, i.e., 
$w^0 = \emptystr$ and $w^k = w^{k-1} \cdot w$ for any $k > 1$.

A \emph{factorization} of a string $w$ is a sequence $w_1, w_2, \dots, w_h$ of substrings of $w$ such that $w = w_1 w_2 \cdots w_h$.
Each substring $w_i~(1 \le i \le h)$ is called a \emph{factor} of the factorization.

The $i$-th symbol of a string $w$ is denoted by $w[i]$ for $1 \leq i \leq |w|$,
and the substring of a string $w$ that begins at position $i$ and
ends at position $j$ is denoted by $w[i..j]$ for $1 \leq i \leq j \leq |w|$,
i.e., $w[i..j] = w[i]w[i+1] \cdots w[j]$.
For convenience, let $w[i..j] = \emptystr$ if $j < i$.

\subsection{Collage Systems and Internal Collage Systems}
A collage system~\cite{2003KidaMSTSA_CollagSystemUnifyFramewFor} of size $m$ has $m$ nonterminals.
Each nonterminal $X$ derives a single string $\expand{X}$ that is defined by a unique rule to expand $X$ of either one of the following forms, where $Y$ and $Z$ are nonterminals:
\begin{itemize}
\item $X \rightarrow a$ is an \emph{atomic rule} such that $\expand{X} = a$, where $a \in \Sigma$.
\item $X \rightarrow YZ$ is a \emph{concatenation rule} such that $\expand{X} = \expand{Y}\expand{Z}$.
\item $X \rightarrow Y^{r}$ with $r > 2$ is a \emph{repetition rule} such that $\expand{X} = \expand{Y}^{r}$.
\item $X \rightarrow Y[b..e)$ with $1 \le b < e \le |\expand{Y}| + 1$ is a \emph{truncation rule} such that $\expand{X} = \expand{Y}[b..e)$.
\end{itemize}
We say that $X$ or its rule \emph{refers to} symbols in the righthand side of the rule.
We assume that nonterminals are sorted such that every nonterminal refers to smaller nonterminals, which is always possible because a collage system must not have a reference loop to represent a finite single string.
Also, we assume that there is no useless nonterminal, i.e., all nonterminals are involved in the process of expanding the starting nonterminal.
A collage system is called \emph{internal} if, for every nonterminal $X$, the starting nonterminal can be expanded to get a sequence that contains $X$ without using truncation rules.
Let $\cgen(T)$ (resp.\ $\cint(T)$) denote the smallest size of collage systems (resp.\ internal collage systems) for a string $T$.

We define the \emph{binary parse tree} of a collage system in a top-down manner as follows:
\begin{itemize}
\item The root node is labeled with the starting nonterminal.
\item If a node is labeled with $X$ that has atomic rule $X \rightarrow a$, its only child is labeled with $a$.
\item If a node is labeled with $X$ that has concatenation rule $X \rightarrow YZ$, 
  its left child is labeled with $Y$ and its right child is labeled with $Z$.
\item If a node is labeled with $X$ that has repetition rule $X \rightarrow Y^{r}$, 
  its left child is labeled with $Y$ and its right child is labeled with $Y^{r-1}$.
\item If a node is labeled with $X$ that has truncation rule $X \rightarrow Y[b..e)$, 
  its only child is labeled with $Y[b..e)$.
\item If a node does not fit in any of the cases above, it is a leaf.
\end{itemize}
Note that we treat $Y^{r-1}$, $Y[b..e)$ and $Y$ as different node labels,
and a node with label $Y^{r-1}$ or $Y[b..e)$ falls into the last case and always becomes a leaf in the binary parse tree.
A collage system is internal if and only if every nonterminal appears as a node label in the binary parse tree.

See \cref{fig:bpt} (resp. \cref{fig:noninternal_bpt}) for examples of an internal collage system (resp. non-internal collage system) and their binary parse tree.
\begin{figure}[t]
  \includegraphics[scale=0.46]{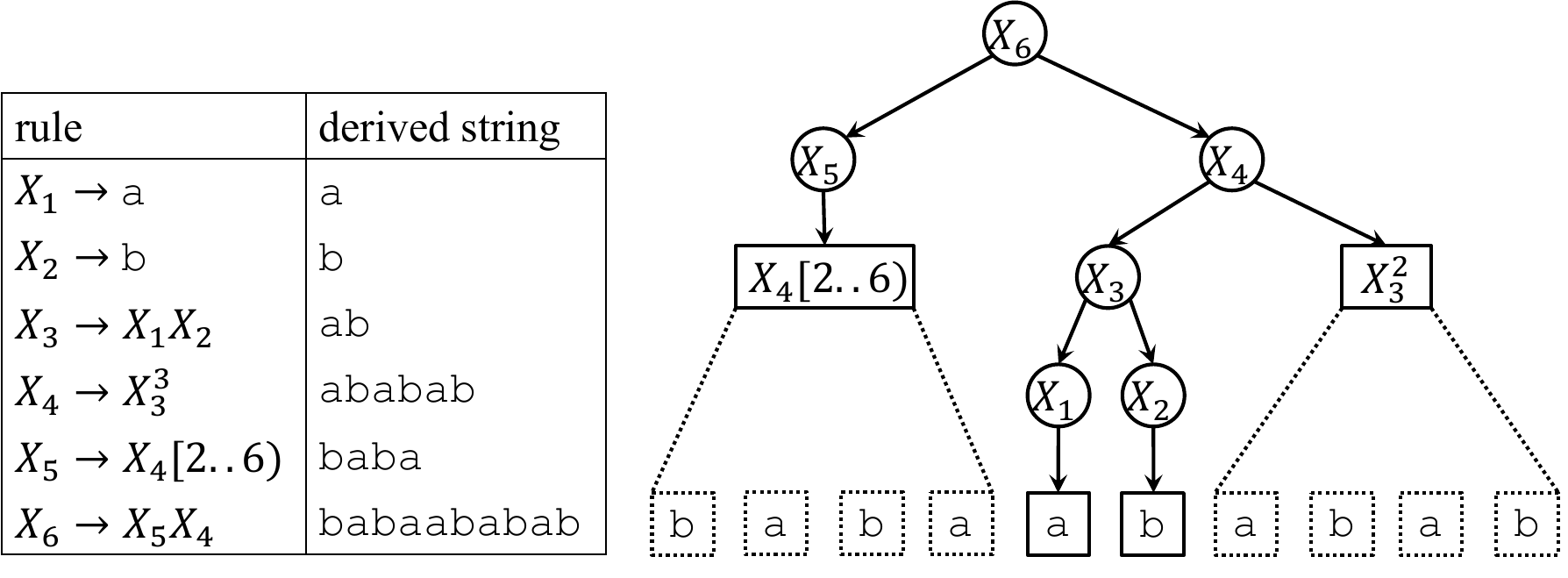}
  \caption{
    An illustration of the binary parse tree (right) for the internal collage system having six rules shown left.
    The internal nodes are depicted by circles, and the leaves by solid boxes.
    The characters derived from leaves are depicted with dotted boxes.
    This is an internal collage system because every nonterminal appears as a node label in the binary parse tree.
  }
  \label{fig:bpt}
\end{figure}

\begin{figure}[t]
  \includegraphics[scale=0.46]{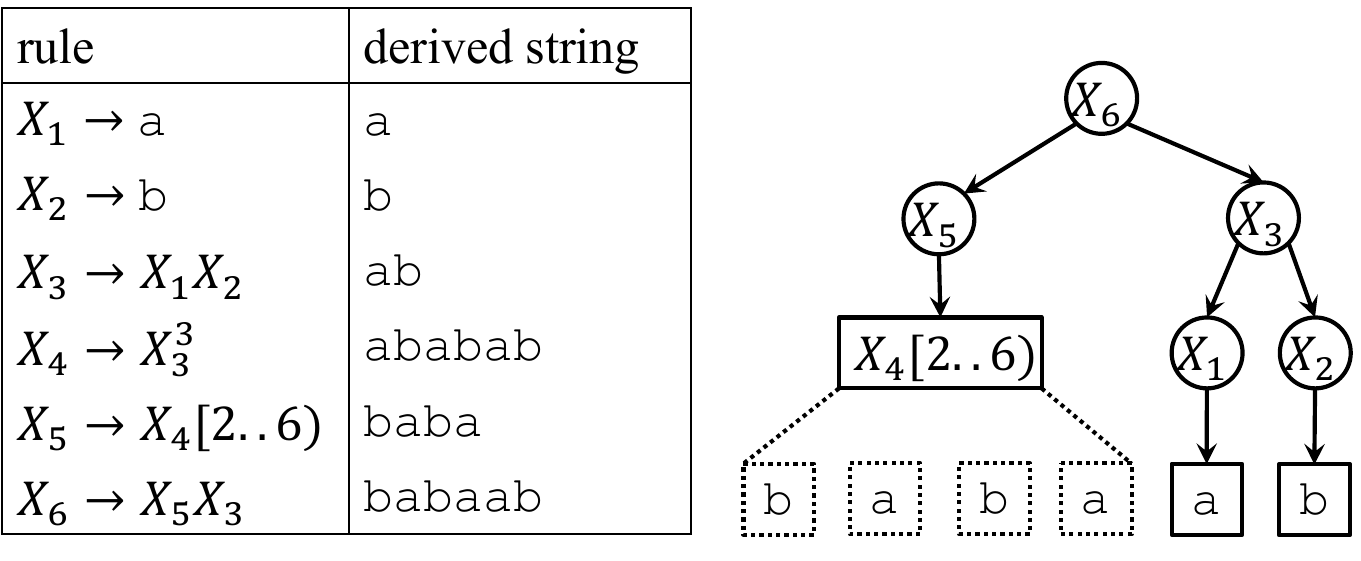}
  \caption{
    An illustration of the binary parse tree (right) for the non-internal collage system having six rules shown left.
    It is not an internal collage system because $X_4$ does not appear as a node label in the binary parse tree.
  }
  \label{fig:noninternal_bpt}
\end{figure}

Similarly to~\cite{2021NavarroOP_ApproxRatioOfOrderParsin}, we define the \emph{grammar tree} of a collage system as the tree obtained from the binary parse tree by deleting all descendant nodes under any node that is not the leftmost node with the same label (see \cref{fig:grammar_tree} for an example).\footnote{The idea of grammar trees for internal collage systems is extended from partial parse trees for SLPs~\cite{2003Rytter_ApplicOfLempelZivFactor}.}
Since the grammar tree has every nonterminal at most once as the label of an internal node, we sometimes identify an internal node by its label.
\begin{proposition}\label{prop:ics}
  The following statements hold for the grammar tree of an internal collage system of size $m$ that contains $\sigma$ atomic rules and $\numtr$ truncation rules:
  \begin{itemize}
     \item The number of internal nodes is $m$ because every nonterminal appears ``exactly once'' as the label of an internal node in the grammar tree.
     \item The number of leaves is $m - \numtr - \sigma + 1$ because every internal node for a concatenation or repetition rule has two children while every internal node for an atomic or truncation rule has a single child.
  \end{itemize}
\end{proposition}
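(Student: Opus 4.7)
The plan is to verify each of the two claims by tracking how the binary parse tree degenerates into the grammar tree.

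For the first claim, I would begin by observing that in the binary parse tree an internal node is necessarily labeled by a nonterminal: the only non-nonterminal labels (namely $a$, $Y^{r-1}$, and $Y[b..e)$) all fall into the final case of the grammar-tree definition and hence are leaves. The internality assumption supplies, for each of the $m$ nonterminals, at least one occurrence as a label in the binary parse tree. Among those occurrences the leftmost one retains its entire subtree under the grammar-tree construction, so it stays an internal node, while every other occurrence is pruned to a leaf. Since the grammar tree is built solely by deleting descendants (never by adding children), its internal nodes form a subset of those of the parse tree. These observations together yield a bijection between nonterminals and internal nodes of the grammar tree, giving exactly $m$ internal nodes.

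For the second claim, I would invoke the standard identity that in a rooted tree whose internal nodes each have either one or two children, the leaf count equals $I_2 + 1$, where $I_2$ counts the internal nodes with two children. This follows from equating the total edge count $|V| - 1$ to $2 I_2 + I_1$, where $I_1$ counts the unary internal nodes. In the grammar tree, each atomic rule and each truncation rule produces a unary internal node, whereas each concatenation rule and each repetition rule produces a binary one; therefore $I_2 = m - \sigma - \numtr$ and the number of leaves is $m - \sigma - \numtr + 1$.

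I do not foresee a serious obstacle: both claims reduce to unpacking the grammar-tree definition. The only subtlety worth being explicit about is that the pruning preserves exactly the leftmost occurrence's subtree, so that each nonterminal contributes precisely one internal node (and not zero); this is immediate once the internality assumption is used to guarantee that a leftmost occurrence even exists.
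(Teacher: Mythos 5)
Your proposal is correct and follows essentially the same route as the paper, which gives only the inline justifications embedded in the proposition statement (each nonterminal labels exactly one internal node; unary versus binary internal nodes determine the leaf count). Your expansion — using internality plus the leftmost-occurrence pruning for the first claim and the edge-counting identity for the second — is just a more explicit version of the same argument.
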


\begin{figure}[t]
  \centering
  \includegraphics[scale=0.46]{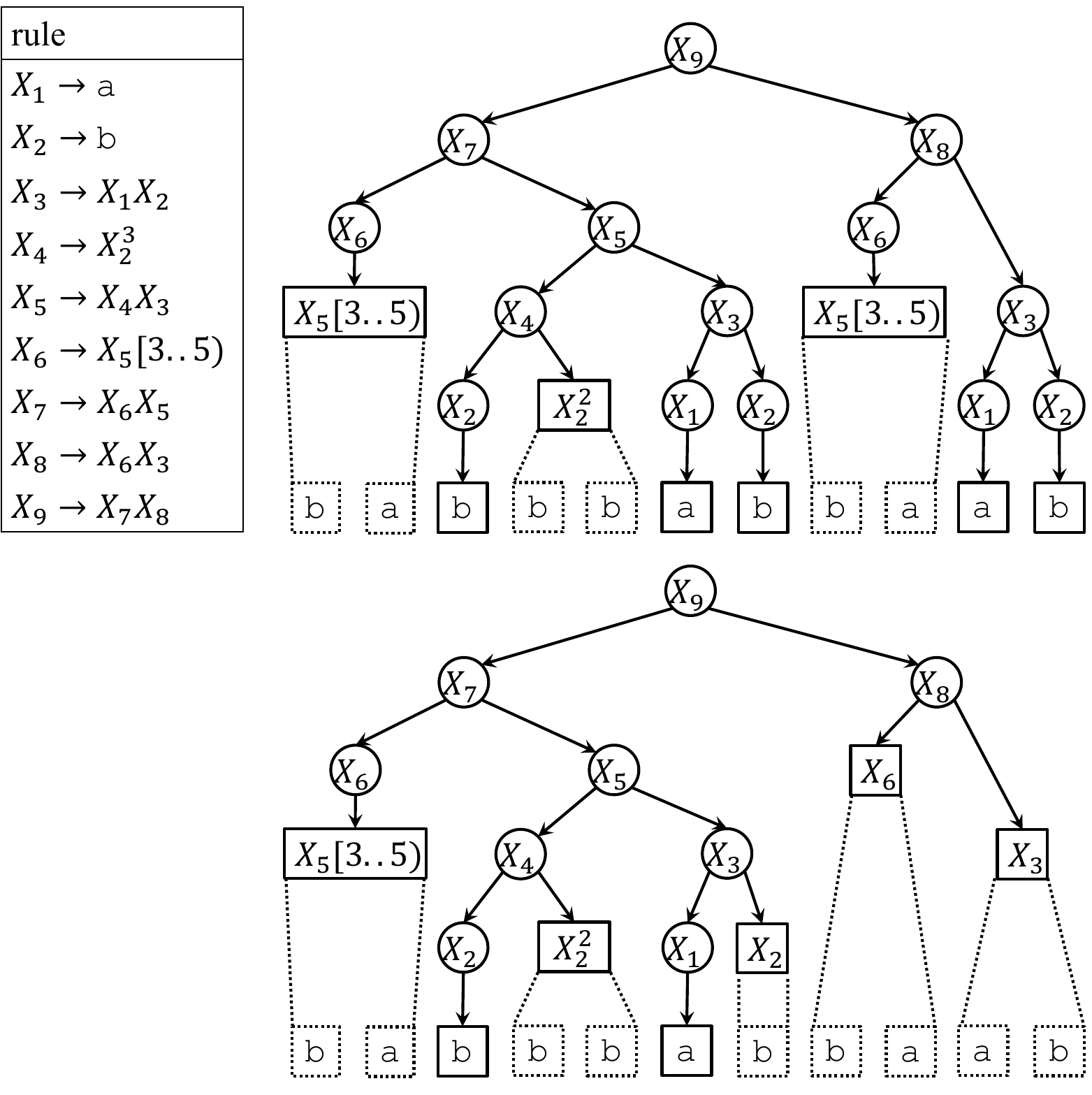}
  \caption{
    An illustration of the binary parse tree (right above) and the grammar tree (right below) for the internal collage system having 9 rules shown left.
    Three internal nodes with label $X_2$, $X_3$ and $X_6$, turn into leaves in the grammar tree because they are not the leftmost nodes with their own label.
    Observe that \cref{prop:ics} holds, i.e., the number of internal nodes is the size $m = 9$ of the collage system, and the number of leaves is $m - \numtr - \sigma + 1 = 9 - 1 - 2 + 1 = 7$, where $\sigma = 2$ is the alphabet size and $\numtr = 1$ is the number of truncation rules.
  }
  \label{fig:grammar_tree}
\end{figure}

\subsection{MAX-SAT}
Given a set of clauses, 
the satisfiability (SAT) problem asks for an assignment of variables that satisfies all the clauses.
An extension to SAT is MAX-SAT, where we are given \emph{hard clauses} and \emph{soft clauses},
and maximize the number of satisfying soft clauses while satisfying all the hard clauses.

\section{Collage Systems to Internal Collage Systems}\label{sec:cs2ics}
In this section, we show that there is an algorithm to convert a collage system of size $m$ for a string $T$ to an internal collage system of size $O(m)$, and prove that $\cint(T) = \Theta(\cgen(T))$.

\begin{figure}[t]
  \centering
  \includegraphics[scale=0.46]{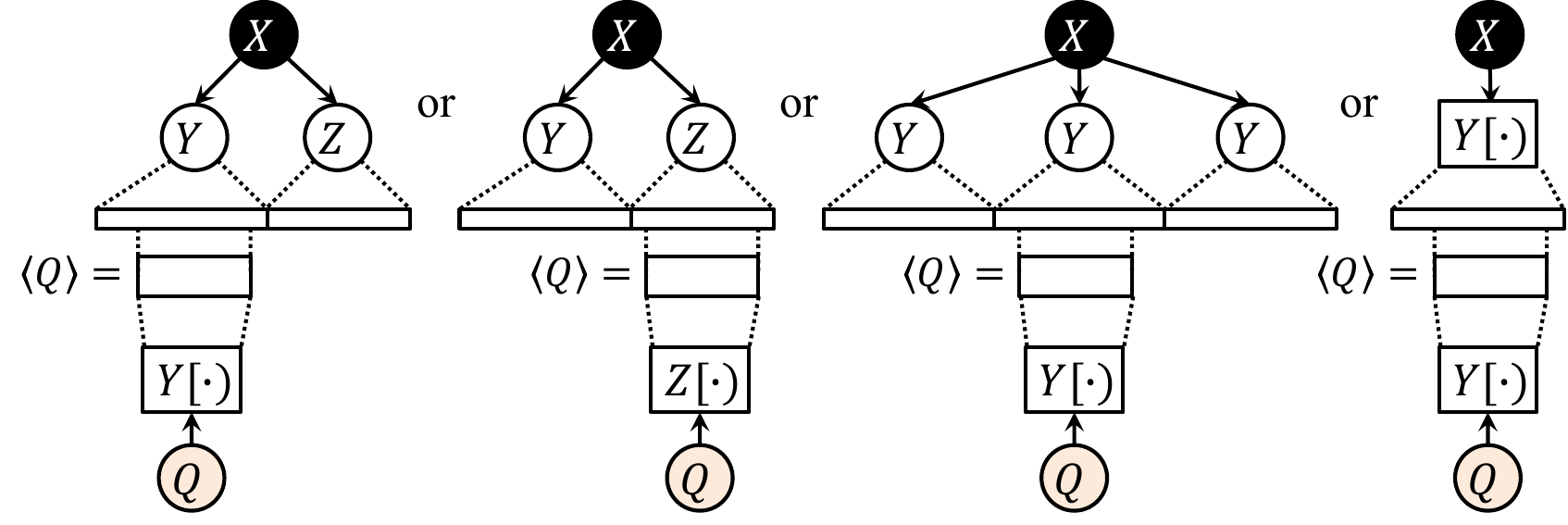}
  \caption{
    Illustration for Case~1.
    The upper parts show where the truncated substring $\expand{Q}$ exists in $\expand{X}$ and the lower parts show how the converted collage system represents it without using $X$.
    The intervals in the truncation rules are abbreviated and shown as ``$[\cdot)$''.
    The reference of a nonterminal $Q$ is changed from an unreachable nonterminal $X$ to a smaller nonterminal $Y$ or $Z$.
    If the new reference is unreachable, it will be processed later.
  }
  \label{fig:case1}
\end{figure}

\begin{figure}[t]
  \centering
  \includegraphics[scale=0.46]{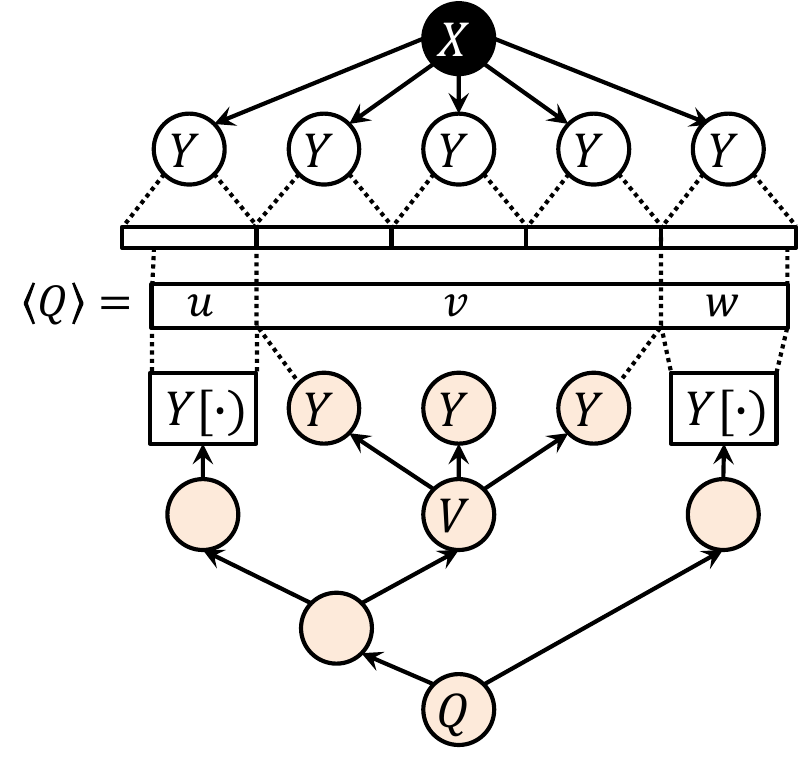}
  \caption{
    Illustration for Case~2.
    At most four new nonterminals including $V$ are enough to represent $\expand{Q}$.
    Note that new rules that truncate $Y$ are introduced, but $Y$ is guaranteed to be reachable via $Q$.
  }
  \label{fig:case2}
\end{figure}

\begin{figure}[t]
  \centering
  \includegraphics[scale=0.46]{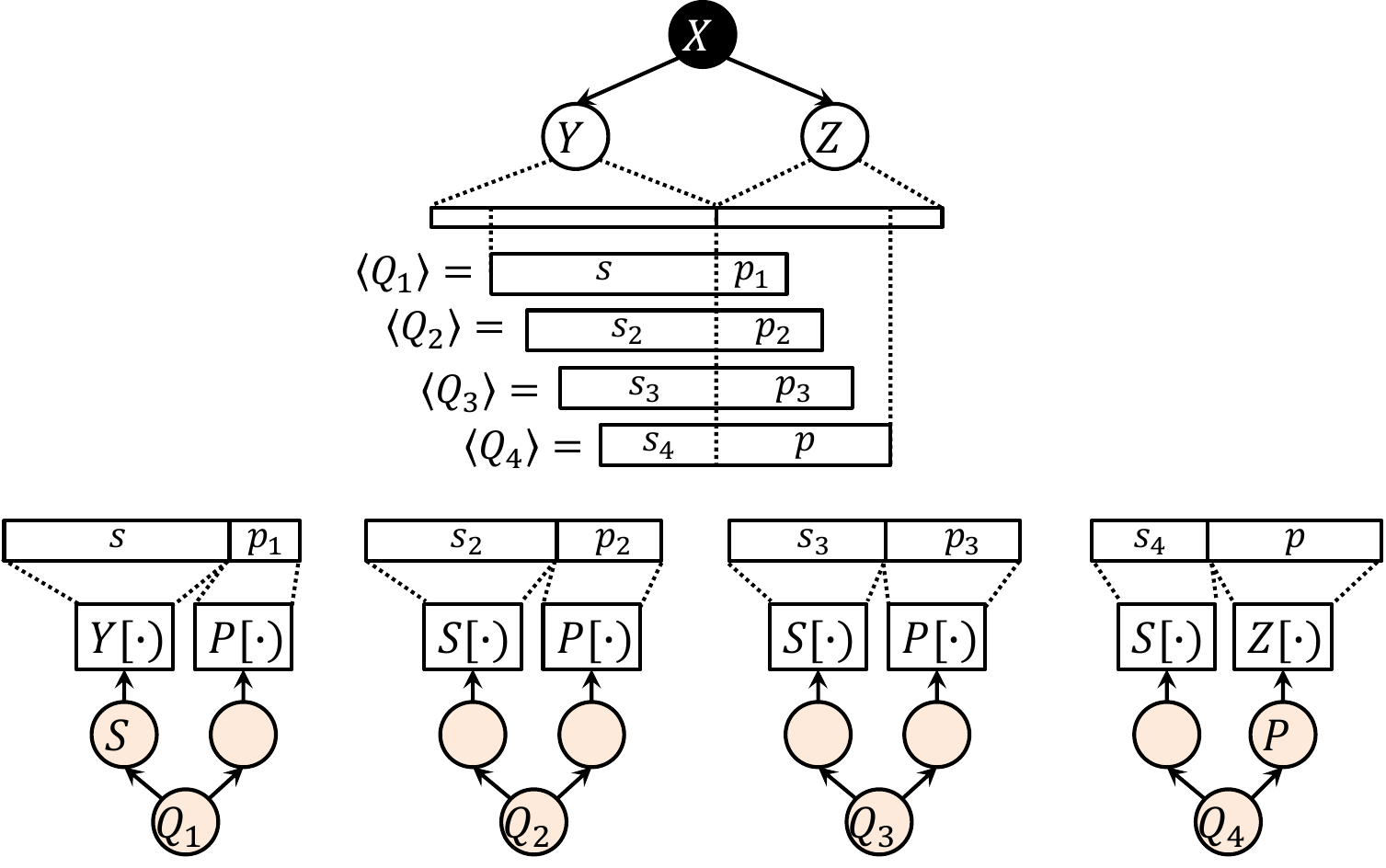}
  \caption{
    Illustration for Case~3, where the set $A$ (described in the proof of \cref{theo:cs2ics}) is assumed to contain four truncation rules with lefthand sides $Q_1, Q_2, Q_3$ and $Q_4$.
    For each $Q_i~(1 \le i \le 4)$, we introduce at most two new nonterminals, which are guaranteed to be reachable via $Q_i$.
    All such nonterminals other than $S$ and $P$ refer to a reachable nonterminal $S$ or $P$.
    The new truncation rule $S \rightarrow Y[\cdot)$ (resp. $P \rightarrow Z[\cdot)$) will be processed later if $Y$ (resp. $Z$) is unreachable.
  }
  \label{fig:case3}
\end{figure}

\begin{theorem}\label{theo:cs2ics}
  There is an algorithm to convert a collage system of size $m$ for a string $T$ to an internal collage system of size $O(m)$ in $O(m^2)$ time.
\end{theorem}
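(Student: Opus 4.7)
The plan is to design an iterative local rewriting procedure. Starting from the given collage system, I would repeatedly take any truncation rule $X \rightarrow Y[b..e)$ whose reference $Y$ is currently non-internal and push the truncation one level deeper into $Y$'s own rule, replacing $X$'s rule by a short equivalent expression that introduces at most a constant number of fresh nonterminals. Once no truncation rule references a non-internal nonterminal, the collage system is internal by definition; a final pass then deletes any nonterminals that have become unreferenced.

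The push operation is a case analysis on $Y$'s rule. If $Y \rightarrow a$, set $X \rightarrow a$. If $Y \rightarrow Y_1 Y_2$, one of three subcases applies: the range $[b..e)$ lies entirely inside $\expand{Y_1}$, lies entirely inside $\expand{Y_2}$ (in either case redirect the truncation to the appropriate child with shifted indices, using no new nonterminals), or straddles the boundary between them, in which case we introduce two fresh nonterminals $X_L \rightarrow Y_1[b..|\expand{Y_1}|+1)$ and $X_R \rightarrow Y_2[1..e-|\expand{Y_1}|)$ and set $X \rightarrow X_L X_R$. If $Y \rightarrow Z^{r}$, decompose $\expand{Y}[b..e)$ as a suffix of one copy of $\expand{Z}$, zero or more full repetitions of $\expand{Z}$, and a prefix of one copy of $\expand{Z}$, using $O(1)$ fresh nonterminals. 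If $Y \rightarrow Z[b'..e')$, simply compose into $X \rightarrow Z[b+b'-1..e+b'-1)$. Each push sends the reference to a strictly smaller nonterminal in the topological order, so the process terminates.

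The main obstacle I anticipate is proving the size bound $O(m)$: the only source of growth is fresh nonterminals from the straddling subcase of concatenation and from the repetition case, and a naive bound would give $O(m \cdot d)$ where $d$ is the grammar's depth, which can be $\Omega(m^2)$ on pathological instances. To achieve $O(m)$, I would use an amortization argument that charges each fresh nonterminal to an original non-internal nonterminal that is thereby rendered unreferenced and ultimately deleted, so that after the final cleanup the net nonterminal count has grown by only $O(m)$. Memoization keyed on the $(Y, b, e)$ triple produced by a push, so that the same sub-truncation is never expanded twice, should also play a role in keeping the count tight. Given such a bound, since each push can be performed in $O(m)$ time (scanning rules to locate and update references) and there are $O(m)$ pushes in total under the above accounting, the overall running time is $O(m^2)$, matching the claim.
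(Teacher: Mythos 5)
Your local rewriting and its case analysis (atomic, concatenation with in-one-child vs.\ straddling subcases, repetition, nested truncation) match the paper's Cases~0--3, and you correctly identify the crux: bounding the number of fresh nonterminals by $O(m)$ rather than $O(m\cdot d)$. However, neither of your two proposed remedies closes that gap. Consider $k=\Theta(m)$ truncation rules $Q_j \rightarrow X_1[b_j..e_j)$ with pairwise distinct $b_j$, all straddling the boundary of an unreachable chain $X_1 \rightarrow X_2 A_2$, $X_2 \rightarrow X_3 A_3$, \dots\ of depth $d=\Theta(m)$. Memoization on $(Y,b,e)$ merges nothing, because the $k$ suffix truncations of $X_2$ (and then of $X_3$, and so on) have pairwise distinct left endpoints, so each level spawns $k$ genuinely new nonterminals; and the amortization cannot work, because only the $d$ nonterminals $X_1,\dots,X_d$ are ever deleted while $\Theta(kd)=\Theta(m^2)$ fresh ones are created, so there are not enough deleted nonterminals to absorb the charges.

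The missing idea, which is the heart of the paper's proof, is to process \emph{all} truncation rules that straddle the same unreachable $X \rightarrow YZ$ in one batch: create a single nonterminal $S$ for the \emph{longest} suffix of $\expand{Y}$ needed and a single $P$ for the longest prefix of $\expand{Z}$ needed, and express every straddling truncation as a concatenation of a truncation of $S$ with a truncation of $P$. Since $S$ and $P$ are used verbatim by the rules attaining those maxima, they are reachable, so the truncations of $S$ and of $P$ never need further processing; only the two rules $S$ and $P$ themselves propagate downward. This caps the number of ``dangling'' truncation rules ever in flight at two per original non-truncation nonterminal, i.e.\ $O(m)$ overall, which yields both the $O(m)$ size bound and (with $O(m)$ redirection steps per surviving truncation rule at $O(1)$ each) the $O(m^2)$ time bound. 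The paper also processes unreachable nonterminals strictly top-down so that, when $X$ is handled, everything referring to $X$ does so only via truncation and is already reachable; your ``take any truncation rule'' phrasing would need the same discipline for the batching and reachability claims to go through.
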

\begin{proof}
  We call a nonterminal \emph{reachable} if it appears as a node label of the grammar tree, and otherwise \emph{unreachable}.
  Then, a collage system is internal if and only if every nonterminal is reachable.
  Our algorithm processes unreachable nonterminals $X$ in a top-down manner and modifies the collage system to represent $T$ without using $X$.
  When we get into an unreachable nonterminal $X$, all nonterminals above have been processed to be reachable, and therefore, $X$ is referred to only by truncation rules because referring to $X_i$ in a rule other than truncation rules implies that $X$ is reachable.
  Hence, removing $X$ from the collage system is equivalent to removing the truncation rules that refer to $X$.
  The process of removing these truncation rules depends on the following cases:
  \begin{itemize}
    \item Case~0: When the truncation rule can be replaced with an atomic rule.
    \item Case~1: When the truncation can be made from a single smaller nonterminal.
    \item Case~2: When the truncation can be made from more than two repeating nonterminals.
    \item Case~3: When the truncation can be made from two nonterminals.
  \end{itemize}

  \textbf{Case~0:}
  Case~0 is the base case where $X \rightarrow a$ is an atomic rule.
  For any nonterminal $Q$ that truncates $X$, it means that $\expand{Q} = a$ and it is obvious that the truncation rule can be replaced with $Q \rightarrow a$.

  \textbf{Case~1:}
  Suppose that we want to remove a truncation rule $Q \rightarrow X[b..e)$.
  Case~1 deals with the following situations:
  \begin{itemize}
    \item When $X \rightarrow YZ$ and the interval $[b..e)$ is included in either one of $[1..|\expand{Y}|]$ or $[|\expand{Y}|+1..|\expand{Z}|]$.
    \item When $X \rightarrow Y^{r}$ and the interval $[b..e)$ is covered by a single $Y$, i.e., $[b..e)$ is included in $[(r'-1)|\expand{Y}| + 1.. r'|\expand{Y}|]$ for some $r' \in [1..r]$.
    \item When $X \rightarrow Y[b'..e')$.
  \end{itemize}
  In either case, it is easy to replace the truncation rule $Q \rightarrow X[b..e)$ with a rule that truncates a smaller nonterminal $Y$ or $Z$. If the smaller nonterminal is unreachable, the added truncation rule will be processed later when we get into the nonterminal.

  \textbf{Case~2:}
  Case~2 deals with a truncation rule $Q \rightarrow X[b..e)$ that truncates unreachable nonterminal $X$ with repetition rule $X \rightarrow Y^{r}$ spanning more than two $Y$'s.
  In this case, $\expand{Q}$ can be represented by $uvw$, where $u$ is a (potentially empty) suffix of $\expand{Y}$, $v = \expand{Y}^{r'}$ with $r' \ge 1$ and $w$ is a (potentially empty) prefix of $\expand{Y}$.
  We add a new nonterminal $U$ (resp. $W$) with a rule that truncates $Y$ to represent $u$ (resp. $w$) if $u$ (resp. $w$) is not the empty string.
  We also add a new nonterminal $V$ with repetition rule $V \rightarrow Y^{r'}$ if $r' > 1$.\footnote{If $r' = 2$, the added rule is categorized to concatenation rules in our definition.}
  Finally, we can represent $Q$ by concatenating at most three nonterminals for $u$, $v$ and $w$ using at most two concatenation rules.
  The process increases the size of the collage system by at most four because we create at most five rules and remove the original truncation rule $Q \rightarrow X[b..e)$.
  Note that $Y$ is reachable because $Q$ is reachable by inductive hypothesis and there is a path from $Q$ to $Y$ without truncation.
  Hence, added rules that truncate $Y$ do not need to be processed later.

  \textbf{Case~3:}
  Case~3 deals with rules that truncate an unreachable nonterminal $X$ with concatenation or repetition rule spanning exactly two nonterminals.
  We focus on the case with concatenation rule $X \rightarrow YZ$ as the case with repetition rule can be processed similarly.
  Let $A$ be the set of all rules that truncate $X$ and their truncated intervals are not included in $[1..|\expand{Y}|]$ nor $[|\expand{Y}|+1..|\expand{X}|]$.
  Unlike other cases, we process all truncation rules in $A$ at once.

  Let $s$ (resp. $p$) be the length of the longest suffix (resp. prefix) truncated from $Y$ (resp. $Z$) by a truncation rule in $A$.
  We create a new nonterminal $S$ that truncates the suffix of length $s$ of $Y$ and $P$ that truncates the prefix of length $p$ of $Z$.
  Then, any truncation made by a rule in $A$ can be replaced with concatenation of a nonterminal that truncates $S$ (or $S$ itself) and a nonterminal that truncates $P$ (or $P$ itself).
  This increases the size of the collage system by at most two per truncation rule in $A$.
  Note that any nonterminal, including $S$ and $P$, created during the process is reachable through the nonterminals in the lefthand sides of $A$, which are reachable by inductive hypothesis.
  Also, $S$ and $P$ are the only nonterminals that are created in the process and may truncate unreachable nonterminals, which will be processed later.

  \textbf{Correctness of conversion:}
  Each modification to the collage system does not change the string it represents.
  Since all nonterminals added during the process are reachable, the collage system becomes internal when we go through the process of removing unreachable nonterminals in the original collage system.

  \textbf{Size analysis:}
  The number of nonterminals in the collage system is increased by the process of removing a rule that truncates unreachable nonterminals by four in Case~2 and by two in Case~3.
  We have to take care of additional rules $S$ and $P$ created in Case~3 that may truncate unreachable nonterminals.
  The number of such additional truncation rules is bounded by $2(m-\numtr)$ in total, where $\numtr$ is the number of truncation rules in the original collage system.
  Thus, the size of the resulting internal collage system is bounded by $m + 4 (2(m-\numtr) + \numtr) = 9m - 4\numtr \le 9m$.

  \textbf{Time complexity:}
  In every case, we can process a truncation rule in $O(1)$ time each.
  While the truncation rule is completely removed in Case~0, Case~2 and Case~3,
  what we do in Case~1 is just switching its reference nonterminal to smaller nonterminals in $O(1)$ time.
  This switching cost may add up to $O(m)$ until we finally remove it in Case~0, Case~2 or Case~3.
  Since we remove $O(m)$ truncation rules, the algorithm runs in $O(m^2)$ time.
\end{proof}

It follows from \Cref{theo:cs2ics} that $\cint(T) \le 9\cgen(T)$.
Together with the fact that $\cgen(T) \le \cint(T)$, we get the following theorem:
\begin{theorem}\label{theo:ics_size}
  For any string $T$, it holds that $\cint(T) = \Theta(\cgen(T))$.
\end{theorem}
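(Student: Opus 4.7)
The plan is to derive both directions of the $\Theta$ bound as immediate consequences of \Cref{theo:cs2ics} together with the trivial inclusion of internal collage systems in collage systems. Because the heavy lifting (the explicit construction, case analysis, and size accounting) has already been done in \Cref{theo:cs2ics}, there is essentially no remaining obstacle; the proof is just a bookkeeping step that packages the conversion result as an inequality between compressibility measures.

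First I would establish the easy direction $\cgen(T) \le \cint(T)$. Every internal collage system is, by definition, a collage system, so the minimum over the larger class cannot exceed the minimum over the smaller class. In particular, a smallest internal collage system for $T$ is itself a valid collage system of size $\cint(T)$, giving $\cgen(T) \le \cint(T)$.

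For the other direction, I would take a smallest collage system for $T$, which by definition has size $\cgen(T)$, and feed it into the conversion algorithm of \Cref{theo:cs2ics}. The size bound proved there is $9m - 4\numtr \le 9m$, so with $m = \cgen(T)$ we obtain an internal collage system for $T$ of size at most $9\cgen(T)$. Since $\cint(T)$ is defined as the minimum size of any internal collage system for $T$, this yields $\cint(T) \le 9\cgen(T)$.

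Combining the two inequalities gives $\cgen(T) \le \cint(T) \le 9\cgen(T)$, which is exactly the statement $\cint(T) = \Theta(\cgen(T))$. The only step one must verify is that the conversion in \Cref{theo:cs2ics} genuinely produces a collage system representing the same string $T$ and satisfying the internality condition, but both facts are explicitly established in the correctness argument within that theorem, so no further work is needed.
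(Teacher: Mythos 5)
Your proposal is correct and matches the paper's argument exactly: the paper derives $\cint(T) \le 9\cgen(T)$ by applying \Cref{theo:cs2ics} to a smallest collage system, and combines this with the obvious inequality $\cgen(T) \le \cint(T)$ coming from the subclass relation. No differences worth noting.
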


\section{MAX-SAT formulation to compute $\cint(T)$}\label{sec:maxsat}
In this section, we give a MAX-SAT formulation to compute $\cint(T)$ for a given string $T$.
The formulation is based on the \emph{ICS-factorization} $T = F_1 \cdots F_h$ defined for an internal collage system such that $F_i$ is the string derived from the $i$-th leaf of its grammar tree.
Recall that $h = m - \numtr - \sigma + 1$ holds by \Cref{prop:ics}, where $m$ is the size and $\numtr$ is the number of truncation rules of the internal collage system.
Thus, minimizing the size $m = h + \numtr + \sigma - 1$ is reduced to finding a valid ICS-factorization with the fewest factors and truncation rules.

\begin{example}
  The ICS-factorization of $T = \idtt{babbbabbaab}$ for the internal collage system of \cref{fig:grammar_tree} has seven factors with $F_1 = \idtt{ba}$, $F_2 = \idtt{b}$, $F_3 = \idtt{bb}$, $F_4 = \idtt{a}$, $F_5 = \idtt{b}$, $F_6 = \idtt{ba}$ and $F_7 = \idtt{ab}$.
\end{example}

The following Lemma gives a necessary and sufficient condition for a valid ICS-factorization, which extends Lemma~4.1 of~\cite{2024KawamotoIKB_HardnOfSmallRlslpAnd_DCC} for RLSLPs to incorporate truncation rules.
\begin{lemma}\label{lemma:ics_factorization_and_reference_structure}
  Let $T = F_1\cdots F_h$ be a factorization of a string $T$ and let $s_i = 1 + \sum_{k=1}^{i-1}|F_k|$ for $1 \le i \le h+1$.
  The factorization is a valid ICS-factorization if and only if each factor $F_k$ longer than $1$ is categorized into either one of the types (A), (B) and (C) so that the following conditions hold:
  \begin{enumerate}[(i)]
    \item Any factor $F_k$ longer than $1$ satisfies the following depending on its type:
    \begin{enumerate}[(A)]
      \item $F_k = T\hbrk{s_{i_k}..s_{j_k+1}}$ for some integers $i_k \le j_k < k$.
      \item $F_k = (T\hbrk{s_{i_k}..s_{j_k+1}})^{r-1}$ for some integers $i_k \leq j_k = k - 1$ and $r \geq 3$.
			\item $F_k$ is a substring of $T\hbrk{s_{i_k}..s_{j_k+1}}$ for some integers $i_k \le j_k$ with $k < i_k$ or $k > j_k$.
    \end{enumerate}
    \item The set $I = I_1 \cup I_2 \cup I_3 \cup I_4$ of intervals is compatible with tree structures, where
    \begin{itemize}
      \item $I_1 = \{\hbrk{s_{i_k}..s_{j_k+1}} \mid F_k \mbox{ is type-A}\}$,
      \item $I_2 = \{\hbrk{s_{i_k}..s_{j_k+1}} \mid F_k \mbox{ is type-B}\}$,
      \item $I_3 = \{\hbrk{s_{i_k}..s_{k+1}} \mid F_k \mbox{ is type-B}\}$, and
      \item $I_4 = \{\hbrk{s_{i_k}..s_{j_{k+1}}} \mid F_k \mbox{ is type-C}\}$.
    \end{itemize}
    \item The interval for a Type-B factor is not in $I$.
    \item Each factor $F_k$ can be given an integer $D_k$ such that:
    \begin{itemize}
      \item $D_k = 0$ if $|F_k| = 1$.
      \item $D_k > \max \{D_{i_k}, \dots, D_{j_k}\}$ if $|F_k| > 1$.
    \end{itemize}
  \end{enumerate}
\end{lemma}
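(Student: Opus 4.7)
The plan is to prove both directions by exploiting the bijection between leaves of the grammar tree of an internal collage system and factors of its ICS-factorization.

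For the forward direction, I begin with an internal collage system and its grammar tree, and set $F_k$ to be the string derived from the $k$-th leaf. Each non-atomic leaf of the grammar tree falls into exactly one of three kinds: a non-leftmost occurrence of some nonterminal $Y$ (Type A), a leaf labeled $Y^{r-1}$ arising from a parent rule $X \rightarrow Y^r$ (Type B), or a leaf labeled $Y\hbrk{b..e}$ arising from a parent rule $X \rightarrow Y\hbrk{b..e}$ (Type C). In every case, internality guarantees that $Y$ has a unique internal node in the grammar tree, whose leaf span I denote $[i_k..j_k]$. The leftmost-occurrence convention forces $j_k < k$ in Type A; for Type B, the base $Y$ is the immediate left sibling of the $Y^{r-1}$ leaf, so $j_k = k-1$; for Type C, the leaf at position $k$ lies outside $Y$'s span, giving $k < i_k$ or $k > j_k$. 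Laminarity of the intervals in $I$ then follows from the tree structure of the grammar tree, and $D_k$ can be taken to be the reference depth of the nonterminal creating $F_k$ in the reference DAG, which is well-defined because every nonterminal refers only to smaller ones, and satisfies (iii) by construction.

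For the backward direction, I construct an internal collage system from a factorization satisfying (i)--(iii). The laminar family $I$ together with the singleton intervals $\hbrk{s_k..s_{k+1}}$ forms a tree $\mathcal{T}$ that serves as the skeleton of the grammar tree, with every non-singleton interval becoming a nonterminal. I process nonterminals in the order given by the depth function: each Type-A factor becomes a non-leftmost leaf of the nonterminal for its $I_1$-interval; each Type-B factor becomes the $Y^{r-1}$ leaf below the nonterminal for its $I_3$-interval, whose rule is $X \rightarrow Y^r$ with $Y$ the nonterminal for its $I_2$-interval; each Type-C factor becomes a truncation leaf $Y\hbrk{b..e}$ of the nonterminal for its $I_4$-interval, with $b,e$ chosen so that the derived string equals $F_k$. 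Any interval with more than two children in $\mathcal{T}$ is binarized by adding auxiliary concatenation rules. The depth ordering guarantees every reference points to an already-constructed nonterminal, so the collage system is well-defined and acyclic, and it is internal because every created nonterminal corresponds to a node of $\mathcal{T}$, which is reachable from the root without passing through a truncation leaf.

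The main obstacle is the backward direction, and in particular the handling of Type-C truncation leaves. Unlike Types A and B, a Type-C leaf refers to a nonterminal whose internal node is disjoint from the leaf's position in $\mathcal{T}$, so the dependency is not encoded by the tree structure alone but only by the depth function; verifying that the resulting references form a DAG requires the strict inequality $D_k > \max\{D_{i_k}, \dots, D_{j_k}\}$ in (iii). A further subtlety is that intervals of different types may coincide, for example the same span may occur in both $I_1$ and $I_4$, in which case a single nonterminal must play both roles, and the laminar condition (ii) is exactly what forces such coincident intervals to sit at the same node of $\mathcal{T}$, making the reuse consistent.
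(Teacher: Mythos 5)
Your proposal is correct and follows essentially the same route as the paper: the forward direction categorizes the non-leftmost, repetition, and truncation leaves of the grammar tree exactly as the paper does, and the backward direction builds the grammar-tree skeleton from the laminar family $I$, fills in missing nodes with concatenation rules, and uses the depth function $D_k$ to rule out reference cycles. Your added remarks on type-C dependencies and coincident intervals only elaborate details the paper leaves implicit.
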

\begin{proof}
  \textbf{$(\Rightarrow)$:}
  Suppose that $T = F_1 \cdots F_h$ is the ICS-factorization of an internal collage system $G$.
  Let $L_i$ denote the $i$-th leaf of the grammar tree of $G$.
  We categorize a factor $F_k$ longer than $1$ and set integers $i_k$ and $j_k$ as follows.
  \begin{itemize}
    \item If $L_k$ is labeled with a nonterminal $X$, then $F_k$ is type-A.
          Since $L_k$ is not the leftmost node with label $X$,
          we can set integers $i_k \le j_k < k$ so that $L_{i_k}, L_{i_k + 1}, \dots, L_{j_k}$ are the leaves under the internal node with label $X$ to satisfy condition (i).
          The interval $\hbrk{s_{i_k}..s_{j_k+1}}$ in $I_1$ corresponds to the internal node $X$.
    \item If $L_k$ is the right child of a nonterminal $X$ with a repetition rule $X \rightarrow Y^{r}$, then $F_k$ is type-B.
          We can set integer $i_k$ so that $L_{i_k}, L_{i_k + 1}, \dots, L_{k-1}$ are the leaves under the left child of $X$ to satisfy condition (i).
          The interval $\hbrk{s_{i_k}..s_{j_k+1}}$ in $I_2$ corresponds to the left child of $X$, while the interval $\hbrk{s_{i_k}..s_{k+1}}$ in $I_3$ corresponds to the internal node $X$.
    \item If $L_k$ is the child of a nonterminal $X$ with a truncation rule $X \rightarrow Y\hbrk{b..e}$, then $F_k$ is type-C.
          We can set integers $i_k$ and $j_k$ so that $L_{i_k}, L_{i_k + 1}, \dots, L_{j_k}$ are the leaves under the internal node with label $Y$ to satisfy condition (i).
          The interval $\hbrk{s_{i_k}..s_{j_{k+1}}}$ in $I_4$ corresponds to the internal node $Y$.
  \end{itemize}

  As seen above, every interval in $I$ corresponds to a node of the grammar tree, and hence, it is compatible with tree structures.
  Also, the interval of a type-B factor corresponds to a leaf node labeled with $Y^{r-1}$, which implies that it is not in $I$.
  Finally, we prove condition (iv).
  For any factor $F_k$ longer than $1$, we have set $i_k$ and $j_k$ so that $F_k$ is derived from a nonterminal that refers to smaller nonterminals deriving $F_{i_k} \cdots F_{j_k}$.
  Since $G$ does not have a reference loop of nonterminals, there is no reference loop of factors, too.
  Hence, we can assign integer $D_k$ to $F_k$ so that $D_k$ is larger than integers $\{D_{i_k}, \dots, D_{j_k}\}$ assigned to factors it refers to.
  Assigning $0$ to a factor of length $1$, we see that condition (iv) holds.

  \textbf{$(\Leftarrow)$:}
  Given that every factor $F_k$ longer than $1$ is categorized into types to satisfy conditions (i), (ii), (iii) and (iv).
  We can define a reference structure of factors according to condition (i).
  Referring to a type-B factor alone is avoided by condition (iii), and it is guaranteed by condition (iv) that there is no reference loop of factors.
  By condition (ii), $I$ can identify the intervals corresponding to nodes of a tree,
  which are sufficient to define a reference structure of nonterminals.
  Although not all nodes are identified, we can fill missing internal nodes by adding concatenation rules to build an internal collage system for the ICS-factorization.
\end{proof}

We modify the MAX-SAT formulation in~\cite{2024KawamotoIKB_HardnOfSmallRlslpAnd_DCC} for RLSLPs to encode valid ICS-factorizations of smallest internal collage systems based on \Cref{lemma:ics_factorization_and_reference_structure}.
We say that an ICS-factor longer than $1$ \emph{refers to} the substring $T\hbrk{s_{i_k}..s_{j_{k+1}}}$ defined in \Cref{lemma:ics_factorization_and_reference_structure} for each type.
Let us call an integer $D_k$ satisfying condition (iii) of \Cref{lemma:ics_factorization_and_reference_structure} the \emph{reference depth} of factor $F_k$ 
or the characters covered by $F_k$.
Note that a type-C factor may refer to a string to the right while type-A and type-B factors always refer to a substring to the left, which adds some complication to our formulation compared to the previous one for RLSLPs.
We will encode reference depths to ensure that there is no reference loop.

In the next paragraph, we will define several Boolean variables with multiple parameters in subscripts.
For any variable $x$ with some missing subscripts specified by ``$\circ$'', 
let $\sub{x}$ denote the set of (possibly tuples of) feasible subscripts to fill the missing part(s).
For example, $\sub{\litrefa{\circ}{\circ}{\circ}} = \{ (i', i, \ell) \mid i \in [1..n-1], \ell \in [2..n-i+1], i' \in [1..i-\ell], T[i'..i'+\ell) = T[i..i+\ell)\}$, and for some fixed $i$ and $\ell$, $\sub{\litrefa{\circ}{i}{\ell}} = \{ i' \mid (i', i, \ell) \in \sub{\litrefa{\circ}{\circ}{\circ}}\}$.
We also use ``$\bullet$'' to represent anonymous (arbitrary) subscripts, which are filtered out when used in $\sub{x}$.
For example, $\sub{\litrefa{\bullet}{\circ}{\circ}} = \{ (i, \ell) \mid (i', i, \ell) \in \sub{\litrefa{\circ}{\circ}{\circ}} \}$.

For a given text $T[1..n]$ of length $n$, we define Boolean variables as follows to encode \Cref{lemma:ics_factorization_and_reference_structure}.
\begin{itemize}
  \item $\litf{i}{\ell}$ for $i \in [1..n]$ and $\ell \in [1..n+1-i]$: $\litf{i}{\ell} = 1$ iff $T[i..i+\ell)$ is an ICS-factor.
  \item $\litp{i}$ for $i \in [1..n+1]$: For $i \neq n+1$, $\litp{i} = 1$ iff $i$ is the starting position of an ICS-factor.
        $\litp{n+1}$ exists for technical reasons. We set $\litp{1} = \litp{n+1} = 1$.
  \item $\litrefa{i'}{i}{\ell}$ for $i \in [1..n-1]$, $\ell \in [2..n - i + 1]$ and $i' \in [1..i-\ell]$ s.t. $T[i'..i'+\ell) = T[i..i+\ell)$:
        $\litrefa{i'}{i}{\ell} = 1$ iff $T[i..i+\ell)$ is a type-A factor that refers to $T[i'..i'+\ell)$.
  \item $\litrefb{i'}{i}{\ell}$ for $i \in [1..n-1]$, $\ell \in [2..n-i+1]$ and $i' \in [i-\ell+1..i-1]$ s.t. $T[i'..i'+\ell) = T[i..i+\ell)$ and $i - i'$ divides $\ell$:
        $\litrefb{i'}{i}{\ell} = 1$ iff $T[i..i+\ell)$ is a type-B factor that refers to $T[i'..i)$ to repeat it.
  \item $\litrefc{i'}{\ell'}{i}{\ell}$ for $i, i' \in [1..n-2]$, $\ell \in [2..n-i+1]$, $\ell' \in [2..n-i'+1]$ and $\hbrk{i..i+\ell} \cap \hbrk{i'..i'+\ell'} = \emptyset$ s.t. $T[i..i+\ell)$ is a substring of $T[i'..i'+\ell')$:
        $\litrefc{i'}{\ell'}{i}{\ell} = 1$ iff $T[i..i+\ell)$ is a type-C factor that refers to $T[i'..i'+\ell')$ to truncate it.
  \item $\litdref{i'}{\ell'}{i}$ for 
        $(i', i, \ell) \in \sub{\litrefa{\circ}{\circ}{\circ}}$ with $\ell' = \ell$, 
        $(i', i) \in \sub{\litrefb{\circ}{\circ}{\bullet}}$ with $\ell' = i - i'$, or
        $(i', \ell', i) \in \sub{\litrefc{\circ}{\circ}{\circ}{\bullet}}$:
        $\litdref{i'}{\ell'}{i} = 1$ iff there is an ICS-factor that starts at $i$ and refers to $T\hbrk{i'..i'+\ell'}$.
  \item $\litq{i'}{\ell'}$ for $(i', \ell') \in \sub{\litdref{\circ}{\circ}{\bullet}}$ or 
        $(i', i, \ell) \in \sub{\litrefb{\circ}{\circ}{\circ}}$ with $\ell' = \ell + i - i'$: $\litq{i'}
        {\ell'} = 1$ iff $\hbrk{i'..i'+\ell'}$ is in the set $I$ defined in \Cref{lemma:ics_factorization_and_reference_structure}.
  \item $\litdepth{i}{\ell}{d}$ for $i \in [1..n]$, $\ell \in [1..n+1-i]$ and $d \in [0..n]$: 
        $\litdepth{i}{\ell}{d} = 1$ iff $d$ is less than or equal to the maximum reference depth of characters in $T\hbrk{i..i+\ell}$.
\end{itemize}

We next define constraints that the above variables must satisfy.

Since the factors and their starting positions must be consistent, we have:
\begin{equation}
    \forall (i, \ell) \in \sub{\litf{\circ}{\circ}}: \litf{i}{\ell} \Longleftrightarrow \litp{i} \land (\neg \litp{i+1}) \land \cdots \land (\neg \litp{i+\ell-1}) \land \litp{i+\ell}.
    \label{eqn:litf}
\end{equation}

If $T[i..i+\ell)$ with $\ell > 1$ cannot be a candidate of type-A, type-B nor type-C factor,
$\litf{i}{\ell}$ must be false, i.e.,
\begin{equation}
	\forall (i, \ell) \in \sub{\litf{\circ}{\circ}} \setminus (\sub{\litrefa{\bullet}{\circ}{\circ}} \cup \sub{\litrefb{\bullet}{\circ}{\circ}} \cup \sub{\litrefc{\bullet}{\bullet}{\circ}{\circ}}) \mbox{ with } l > 1 : \lnot \litf{i}{l}.
  \label{eqn:neg_litf}
\end{equation}

If $T[i..i+\ell)$ is a type-A factor, there exists $i' \in \litrefa{\circ}{i}{\ell}$ s.t. $\litrefa{i'}{i}{\ell}$ is true.
If $T[i..i+\ell)$ is a type-B factor, there exists $i' \in \litrefb{\circ}{i}{\ell}$ s.t. $\litrefb{i'}{i}{\ell}$ is true.
If $T[i..i+\ell)$ is a type-C factor, there exists $(i', \ell') \in \litrefc{\circ}{\circ}{i}{\ell}$ s.t. $\litrefc{i'}{\ell'}{i}{\ell}$ is true.
Conversely, if one of the variables of the form $\litrefa{i'}{i}{\ell}$, $\litrefb{i'}{i}{\ell}$ or $\litrefc{i'}{\ell'}{i}{\ell}$ is true, $T[i..i+\ell)$ must be a factor. This can be encoded as
\begin{align}
  &
  \forall (i, \ell) \in \sub{\litrefa{\bullet}{\circ}{\circ}} \cup \sub{\litrefb{\bullet}{\circ}{\circ}} \cup \sub{\litrefc{\bullet}{\bullet}{\circ}{\circ}}: 
  \nonumber \\
  &
  \litf{i}{\ell} \Longleftrightarrow
  \left( \bigvee_{i' \in \sub{\litrefa{\circ}{i}{\ell}}} \hspace{-1.5em} \litrefa{i'}{i}{\ell} \right) \vee 
  \left( \bigvee_{i' \in \sub{\litrefb{\circ}{i}{\ell}}} \hspace{-1.5em} \litrefb{i'}{i}{\ell} \right) \vee
  \left( \bigvee_{(i', \ell') \in \sub{\litrefc{\circ}{\circ}{i}{\ell}}} \hspace{-2.3em} \litrefc{i'}{\ell'}{i}{\ell} \right).
  \label{eqn:litf_litref}
\end{align}

For a fixed substring $T[i..i+\ell)$, at most one variable of the form $\litrefa{i'}{i}{\ell}$, $\litrefb{i'}{i}{\ell}$ or $\litrefc{i'}{\ell'}{i}{\ell}$ is allowed to indicate that $T[i..i+\ell)$ is a factor. Thus, we require
\begin{align}
  &
  \forall (i, \ell) \in \sub{\litrefa{\bullet}{\circ}{\circ}} \cup \sub{\litrefb{\bullet}{\circ}{\circ}} \cup \sub{\litrefc{\bullet}{\bullet}{\circ}{\circ}}: \nonumber \\
  & 
  \sum_{i' \in \sub{\litrefa{\circ}{i}{\ell}}} \hspace{-1.5em} \litrefa{i'}{i}{\ell} +
  \sum_{i' \in \sub{\litrefb{\circ}{i}{\ell}}} \hspace{-1.5em} \litrefb{i'}{i}{\ell} +
  \sum_{(i', \ell') \in \sub{\litrefc{\circ}{\circ}{i}{\ell}}} \hspace{-2.3em} \litrefc{i'}{\ell'}{i}{\ell} \le 1.
  \label{eqn:litref_sum}
\end{align}
Constraint~\ref{eqn:litref_sum} for a fixed $(i, \ell)$ can be encoded in size of $\left| \sub{\litrefa{\circ}{i}{\ell}} \cup \sub{\litrefb{\circ}{i}{\ell}} \cup \sub{\litrefc{\circ}{\circ}{i}{\ell}} \right| = O(n^2)$ using an efficient encoding for this kind of ``at-most'' constraints~\cite{2005Sinz_TowarOptimCnfEncodOf_CP}.

If $\litdref{i'}{\ell'}{i}$ is true, then there is a factor that refers to $T[i'..i'+\ell')$ and vice versa.
\begin{align}
  &
  \forall (i', \ell', i) \in \sub{\litdref{\circ}{\circ}{\circ}}:
  \nonumber \\
  &
  \litdref{i'}{\ell'}{i} \Longleftrightarrow
  \litrefa{i'}{i}{\ell'} \vee
  \left( \bigvee_{\ell \in \sub{\litrefb{i'}{i}{\circ}}} \litrefb{i'}{i}{\ell} \right) \vee
  \left( \bigvee_{\ell \in \sub{\litrefc{i'}{\ell'}{i}{\circ}}} \litrefc{i'}{\ell'}{i}{\ell} \right).
  \label{eqn:litdref_litref}
\end{align}

We let $\litq{i'}{\ell}$ summarize the information on whether
$T[i'..i'+\ell')$ corresponds to a node implied by
$\litrefa{\bullet}{\bullet}{\bullet}$, $\litrefb{\bullet}{\bullet}{\bullet}$ and $\litrefc{\bullet}{\bullet}{\bullet}{\bullet}$:
\begin{equation}
    \forall (i', \ell') \in \sub{\litq{\circ}{\circ}}:
    \litq{i'}{\ell'} \Longleftrightarrow
    \left( \bigvee_{i \in \sub{\litdref{i'}{\ell'}{\circ}}} \hspace{-1.9em} \litdref{i'}{\ell'}{i} \right) \vee
    \left( \bigvee_{\substack{(i, \ell) \in \sub{\litrefb{i'}{\circ}{\circ}}\\ \mbox{with } \ell' = \ell + i - i'}}
           \hspace{-1.9em} \litrefb{i'}{i}{\ell} \right).
    \label{eqn:litq_litref}
\end{equation}

If $\litq{i'}{\ell'}$ is true, then $i'$ must be the starting position of an ICS-factor.
\begin{eqnarray}
    \forall (i', \ell') \in \sub{\litq{\circ}{\circ}}:
    \litq{i'}{\ell'}
    \Longrightarrow
    \litp{i'}.
    \label{eqn:litq_litp}
\end{eqnarray}

If $\litq{i'}{\ell'}$ is true, then $T[i'..i'+\ell')$ must not be a type-B factor:
\begin{eqnarray}
    \forall (i', \ell') \in \sub{\litq{\circ}{\circ}}, \forall i'' \in \sub{\litrefb{\circ}{i'}{\ell'}}:
    \litq{i'}{\ell'}
    \Longrightarrow
    \lnot \litrefb{i''}{i'}{\ell'}.
    \label{eqn:litq_notlitrefb}
\end{eqnarray}

The set of intervals indicated by $\litq{\bullet}{\bullet}$ must be compatible with tree structures.
In other words, for any two substrings $T[i'_1..i'_1+\ell'_1)$ and $T[i'_2..i'_2+\ell'_2)$
with $i'_1 < i'_2 < i'_1 + \ell'_1 < i'_2 + \ell'_2$,
at most one of $T[i'_1..i'_1+\ell'_1)$ and $T[i'_2..i'_2+\ell'_2)$
can correspond to a node of the grammar tree.
Thus, we require that
\begin{eqnarray}
    && \forall (i'_1, \ell'_1), (i'_2, \ell'_2) \in \sub{\litq{\circ}{\circ}} \mbox{ s.t. } i'_1 < i'_2 < i'_1 + \ell'_1 < i'_2 + \ell'_2: 
    \neg \litq{i'_1}{\ell'_1} \lor \neg \litq{i'_2}{\ell'_2}.
    \label{eqn:litq_nest}
\end{eqnarray}

Since the reference depth of a character is at least 0, $\litdepth{i}{1}{0}$ is always true:
\begin{align}
	\forall i \in [1..n]: \litdepth{i}{1}{0} & = 1.
	\label{eqn:litdepth_0}
\end{align}

If the reference depth of a character is at least $d$, then it is also at least $d-1$:
\begin{align}
	\forall i \in [1..n], \forall d \in [1..n]: \litdepth{i}{1}{d} \Longrightarrow \litdepth{i}{1}{d-1}.
  \label{eqn:litdepth_less}
\end{align}

The reference depths of characters in the same factor are the same:
\begin{align}
	\forall i \in [2..n], \forall d \in [0..n]: \neg p_{i} \Longrightarrow \litdepth{i}{1}{d} = \litdepth{i-1}{1}{d}.
	\label{eqn:litdepth_factor}
\end{align}

The reference depth of a substring is the maximum of the reference depths of characters in the substring:
\begin{align}
	\forall i \in [1..n], \forall \ell \in [1..n+1-i], \forall d \in [0..n]: \litdepth{i}{\ell}{d}\Longleftrightarrow \bigvee_{j=i}^{i+l-1} \litdepth{j}{1}{d}.
	\label{eqn:litdepth_max}
\end{align}

If an ICS-factor starting at $i$ refers to substring $T[i'..i'+\ell)$,
the reference depth of $T[i]$ is larger than the reference depth of $T[i'..i'+\ell)$, 
which is encoded as
\begin{align}
	\forall (i',\ell',i) \in \sub{\litdref{\circ}{\circ}{\circ}}:
  \litdref{i'}{\ell'}{i} \Longrightarrow \forall d \in [1..n] (\lnot \litdepth{i}{1}{d} \Longrightarrow \lnot \litdepth{i'}{\ell'}{d-1}).
	\label{eqn:litdepth_dref}
\end{align}

From the above SAT formulation, we get the following theorem:
\begin{theorem}
  There is a MAX-SAT formulation of size $O(n^4)$
  to compute smallest internal collage systems for a string $T$ of length $n$.
\end{theorem}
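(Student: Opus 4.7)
The plan is to take the SAT clauses already displayed as Equations~(\ref{eqn:litf})--(\ref{eqn:litdepth_dref}) as the hard part of a MAX-SAT instance and to add lightweight soft clauses whose maximum-satisfaction count reduces to minimizing $m = h + \numtr + \sigma - 1$. Since $\sigma$ depends only on the alphabet of $T$ and is therefore fixed once $T$ is given, minimizing $m$ is the same as minimizing $h + \numtr$. I would accordingly declare, for every position $i \in [2..n]$, the unit soft clause $\lnot \litp{i}$, and for every tuple $(i', \ell', i, \ell) \in \sub{\litrefc{\circ}{\circ}{\circ}{\circ}}$, the unit soft clause $\lnot \litrefc{i'}{\ell'}{i}{\ell}$. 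Using Constraint~(\ref{eqn:litref_sum}), which guarantees that at most one type-C indicator is true for each factor position, the number of satisfied soft clauses equals $(n - h) + (|\sub{\litrefc{\circ}{\circ}{\circ}{\circ}}| - \numtr)$, so maximizing it is exactly minimizing $h + \numtr$ as required.

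Next I would verify correctness by showing that satisfying assignments of the hard clauses correspond bijectively to data $(F_1, \ldots, F_h;\ \text{types};\ i_k, j_k;\ D_k)$ meeting the three conditions of \Cref{lemma:ics_factorization_and_reference_structure}. Constraints~(\ref{eqn:litf}) and~(\ref{eqn:neg_litf}) tie the $\litf{i}{\ell}$ to a well-defined factorization and rule out factors of length greater than $1$ that admit no reference. Constraints~(\ref{eqn:litf_litref})--(\ref{eqn:litref_sum}) force each long factor to receive exactly one type (A, B, or C) and one reference, realizing condition~(i). Constraints~(\ref{eqn:litdref_litref})--(\ref{eqn:litq_nest}) collect the induced intervals into the set $I$ of the lemma and enforce its compatibility with a tree structure, realizing condition~(ii). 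Finally, Constraints~(\ref{eqn:litdepth_0})--(\ref{eqn:litdepth_dref}) encode a reference-depth function that strictly increases along every reference edge, realizing condition~(iii) and thereby ruling out reference loops. Together with \Cref{lemma:ics_factorization_and_reference_structure}, this establishes the bijection and hence that any optimal assignment encodes a smallest internal collage system.

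The last step is to bound the formulation size. The dominant variable family is the $O(n^4)$ many instances of $\litrefc$; all other families contribute at most $O(n^3)$. Among the clauses, the chief $O(n^4)$ contributions come from Constraints~(\ref{eqn:litf_litref})--(\ref{eqn:litref_sum}) after using a compact at-most-one encoding such as~\cite{2005Sinz_TowarOptimCnfEncodOf_CP}, from Constraint~(\ref{eqn:litq_nest}) which ranges over $O(n^4)$ nested pairs, and from Constraints~(\ref{eqn:litdepth_max}) and~(\ref{eqn:litdepth_dref}) each of which expands to $O(n^4)$ small clauses over tuples that include the depth parameter $d$. All remaining constraints sum to $O(n^3)$. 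The main obstacle I anticipate is precisely this final bookkeeping: Constraint~(\ref{eqn:litdepth_dref}) naively looks like it could be $O(n^5)$, so one has to observe that quantifying over $d \in [1..n]$ produces only $n$ ternary clauses per $(i', \ell', i)$ tuple, and likewise the cardinality sum in~(\ref{eqn:litref_sum}) must be realized with Sinz's scheme rather than the pairwise encoding to avoid a $\Omega(n^4)$ blow-up per $(i, \ell)$. Once these two points are checked, combining all the bounds gives the claimed $O(n^4)$ total.
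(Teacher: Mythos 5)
Your proposal is correct and follows essentially the same route as the paper: the displayed constraints serve as hard clauses, the soft clauses $\lnot \litp{\cdot}$ and $\lnot \litrefc{\cdot}{\cdot}{\cdot}{\cdot}$ encode minimization of $h + \numtr$ (with $\sigma$ fixed), correctness is reduced to conditions (i)--(iii) of \Cref{lemma:ics_factorization_and_reference_structure}, and the $O(n^4)$ bound is dominated by the $\litrefc$ variables and the constraints with four free $O(n)$-valued parameters. Your extra bookkeeping on the exact count of satisfied soft clauses and on using Sinz's at-most-one encoding for Constraint~(\ref{eqn:litref_sum}) only makes explicit what the paper states more briefly.
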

\begin{proof}
  \textbf{Correctness of SAT formulation:}
  We prove the correctness of our SAT formulation presented in this section for ICS-factorizations.
  On the one hand, given an internal collage system,
  it is clear that a truth assignment to Boolean variables based on the definition will satisfy all constraints.
  On the other hand, given $T$ and a truth assignment satisfying the constraints,
  we can deduce a factorization to satisfy the conditions (i), (ii), (iii) and (iv) of \Cref{lemma:ics_factorization_and_reference_structure}.
  First, the truth assignments of $\litp{i}$ and Constraint~(\ref{eqn:litf}) 
  give a factorization of $T$ where we regard $T[i..i+\ell)$ as a factor if and only if $\litf{i}{\ell} = 1$.
  If $\litf{i}{\ell} = 1$ with $\ell > 1$,
  the type of factor $T[i..i+\ell)$ and its reference to satisfy condition (i) are specified by the truth assignments of $\litrefa{\bullet}{\bullet}{\bullet}$, $\litrefb{\bullet}{\bullet}{\bullet}$ and $\litrefc{\bullet}{\bullet}{\bullet}{\bullet}$,
  which are suitably controlled by Constraints~(\ref{eqn:neg_litf}),~(\ref{eqn:litf_litref}) and~(\ref{eqn:litref_sum}).
  By Constraint~(\ref{eqn:litdref_litref}),
  the information about the starting position of a factor and its reference is collected in variables $\litdref{\bullet}{\bullet}{\bullet}$.
  By Constraint~(\ref{eqn:litq_litref}), the truth assignments of $\litq{\bullet}{\bullet}$
  give the set $I$ of intervals of \Cref{lemma:ics_factorization_and_reference_structure}.
  The starting position of every interval of $I$ must be the starting position of an ICS-factor, which is enforced by Constraint~(\ref{eqn:litq_litp}).
  Condition (iii) is fulfilled by Constraint~(\ref{eqn:litq_notlitrefb}).
  Constraint~(\ref{eqn:litq_nest}) ensures that $I$ is compatible with tree structures to satisfy condition (ii).
  Finally, condition (iv) is ensured by Constraints~(\ref{eqn:litdepth_0}), (\ref{eqn:litdepth_less}), (\ref{eqn:litdepth_factor}), (\ref{eqn:litdepth_max}) and (\ref{eqn:litdepth_dref}) using $\litdref{\bullet}{\bullet}{\bullet}$ and $\litdepth{\bullet}{\bullet}{\bullet}$.

  \textbf{MAX-SAT formulation:}
  By \Cref{prop:ics}, the size $m$ of an internal collage system corresponding to an ICS-factorization is
  $m = h + \numtr + \sigma - 1$, where $h$ is the number of factors, $\numtr$ is the number of truncation rules, and $\sigma$ is the number of distinct characters in $T$.
  Since $\sigma$ is fixed, $m$ is minimized if $h + \numtr$ is minimized.
  This can be encoded in MAX-SAT by introducing the set of soft clauses consisting of $\lnot \litp{\bullet}$ and $\lnot \litrefc{\bullet}{\bullet}{\bullet}{\bullet}$ 
  because satisfying these soft clauses as many as possible leads to minimizing $h + \numtr$.

  \textbf{Size analysis:}
  In total, we have $O(n^4)$ Boolean variables dominated by $\litrefc{\bullet}{\bullet}{\bullet}{\bullet}$.
  The total size of the resulting MAX-SAT formulation is $O(n^4)$,
  dominated by Constraints~(\ref{eqn:litf_litref}), (\ref{eqn:litref_sum}), (\ref{eqn:litdref_litref}), (\ref{eqn:litq_litref}), (\ref{eqn:litq_nest}), (\ref{eqn:litdepth_max}) and (\ref{eqn:litdepth_dref}) where there are four free parameters each of which take $O(n)$ different values.
\end{proof}

\section{Conclusions and future work}
In this paper, we proposed an $O(m^2)$-time algorithm to convert a collage system of size $m$ for a string $T$ to an internal collage system of size $\le 9m$.
As a consequence, we obtained new bounds $\cint(T) = \Theta(\cgen(T))$ and $b(T) = O(\cgen(T))$ on the smallest size of collage systems.
We also give a MAX-SAT formulation to compute $\cint(T)$ for a given $T$.

Future work includes the following directions:
\begin{itemize}
  \item Improve the current (probably loose) upper bound $9m$ for the size of converted internal collage system or give a nontrivial lower bound.
  \item Improve $O(m^2)$ running time for the conversion algorithm.
  \item Implement our MAX-SAT formulation and evaluate how well it scales compared to other formulations for SLPs~\cite{2022BannaiGIKKN_ComputNpHardRepetMeasur_ESA} and RLSLPs~\cite{2024KawamotoIKB_HardnOfSmallRlslpAnd_DCC}.
  \item Consider a formulation to compute $\cgen(T)$.
        Here, a new technique would be required because the current approach, inherited from~\cite{2022BannaiGIKKN_ComputNpHardRepetMeasur_ESA,2024KawamotoIKB_HardnOfSmallRlslpAnd_DCC}, depends on the fact that $\cint(T)$ is measured through the ICS-factorization of $T$, but it does not hold for general collage systems that have ``external'' references.
  \item Prove or disprove that $z(T) / \cgen(T) \in O(1)$, which is left unknown in the relation between compressibility measures presented in~\cite[Fig. 1]{2023KociumakaNP_TowarDefinComprMeasurFor}.
\end{itemize}

\bibliography{refs}

@InProceedings{1995KarpinskiRS_PatterMatchForStrinWith_CPM,
	Title = {Pattern-Matching for Strings with Short Descriptions},
	Author = {Marek Karpinski and Wojciech Rytter and Ayumi Shinohara},
	Booktitle = {Proc. 6th Annual Symposium on Combinatorial Pattern Matching ({CPM}) 1995},
	Year = {1995},
	Pages = {205-214},
}

@Article{2012Lohrey_AlgorOnSlpComprStrin,
	Title = {Algorithmics on SLP-compressed strings: A survey},
	Author = {Markus Lohrey},
	Journal = {Groups Complexity Cryptology},
	Year = {2012},
	Number = {2},
	Pages = {241--299},
	Volume = {4},
}

@Article{1976LempelZ_ComplOfFinitSequen_TIT,
	author = {Abraham Lempel and Jacob Ziv},
	title = {On the Complexity of Finite Sequences},
	journal = {{IEEE} Trans. Information Theory},
	volume = {22},
	number = {1},
	pages = {75--81},
	year = {1976},
}

@Article{Ziv1978Coi,
	Title = {Compression of individual sequences via variable-rate coding},
	Author = {Jacob Ziv and Abraham Lempel},
	Journal = {{IEEE} Transactions on Information Theory},
	Year = {1978},
	Number = {5},
	Pages = {530--536},
	Volume = {24},
}

@Article{1982StorerS_DataComprViaTexturSubst,
	Title = {Data compression via textural substitution},
	Author = {James A. Storer and Thomas G. Szymanski},
	Journal = {J. {ACM}},
	Year = {1982},
	Number = {4},
	Pages = {928--951},
	Volume = {29},
}

@InProceedings{1999LarssonM_OfflinDictionBasedCompr_DCC,
	author = {N. Jesper Larsson and Alistair Moffat},
	title = {Offline Dictionary-Based Compression},
	booktitle = {Proc. Data Compression Conference ({DCC}) 1999},
	year = {1999},
	pages = {296--305},
}

@Article{2003KidaMSTSA_CollagSystemUnifyFramewFor,
	author = {Takuya Kida and Tetsuya Matsumoto and Yusuke Shibata and Masayuki Takeda and Ayumi Shinohara and Setsuo Arikawa},
	title = {Collage system: a unifying framework for compressed pattern matching},
	journal = {Theor. Comput. Sci.},
	volume = {298},
	number = {1},
	pages = {253--272},
	year = {2003},
}

@InProceedings{1996GasieniecKPR_EfficAlgorForLempelZip_SWAT,
	author = {Leszek Gasieniec and Marek Karpinski and Wojciech Plandowski and Wojciech Rytter},
	editor = {Rolf G. Karlsson and
                  Andrzej Lingas},
	title = {Efficient Algorithms for {Lempel-Zip} Encoding (Extended Abstract)},
	booktitle = {Proc. 5th Scandinavian Workshop on Algorithm Theory ({SWAT}) 1996},
	series = {Lecture Notes in Computer Science},
	volume = {1097},
	pages = {392--403},
	publisher = {Springer},
	year = {1996},
}

@InProceedings{2016NishimotoIIBT_FullyDynamDataStrucFor_MFCS,
	author = {Takaaki Nishimoto and Tomohiro I and Shunsuke Inenaga and Hideo Bannai and Masayuki Takeda},
	title = {Fully Dynamic Data Structure for {LCE} Queries in Compressed Space},
	booktitle = {Proc. 41st International Symposium on Mathematical Foundations of Computer Science ({MFCS}) 2016},
	year = {2016},
	pages = {72:1--72:15},
}

@InProceedings{2022BannaiGIKKN_ComputNpHardRepetMeasur_ESA,
	author = {Hideo Bannai and Keisuke Goto and Masakazu Ishihata and Shunsuke Kanda and Dominik K{\"{o}}ppl and Takaaki Nishimoto},
	title = {Computing {NP}-Hard Repetitiveness Measures via {MAX-SAT}},
	booktitle = {Proc. 30th Annual European Symposium on Algorithms ({ESA}) 2022},
	series = {LIPIcs},
	volume = {244},
	pages = {12:1--12:16},
	publisher = {Schloss Dagstuhl - Leibniz-Zentrum f{\"{u}}r Informatik},
	year = {2022},
}

@InProceedings{2024KawamotoIKB_HardnOfSmallRlslpAnd_DCC,
	author = {Akiyoshi Kawamoto and Tomohiro I and Dominik K{\"{o}}ppl and Hideo Bannai},
	title = {On the Hardness of Smallest {RLSLPs} and Collage Systems},
	booktitle = {Proc. Data Compression Conference ({DCC}) 2024},
	pages = {243--252},
	publisher = {{IEEE}},
	year = {2024},
}

@Article{2021NavarroOP_ApproxRatioOfOrderParsin,
	author = {Gonzalo Navarro and Carlos Ochoa and Nicola Prezza},
	title = {On the Approximation Ratio of Ordered Parsings},
	journal = {{IEEE} Trans. Inf. Theory},
	volume = {67},
	number = {2},
	pages = {1008--1026},
	year = {2021},
}

@Article{2021Navarro_IndexHighlRepetStrinCollec_I,
	author = {Gonzalo Navarro},
	title = {Indexing Highly Repetitive String Collections, Part {I:} Repetitiveness Measures},
	journal = {{ACM} Comput. Surv.},
	volume = {54},
	number = {2},
	pages = {29:1--29:31},
	year = {2021},
}

@Article{2021Navarro_IndexHighlRepetStrinCollec_II,
	author = {Gonzalo Navarro},
	title = {Indexing Highly Repetitive String Collections, Part {II:} Compressed Indexes},
	journal = {{ACM} Comput. Surv.},
	volume = {54},
	number = {2},
	pages = {26:1--26:32},
	year = {2021},
}

@InProceedings{2005Sinz_TowarOptimCnfEncodOf_CP,
	author = {Carsten Sinz},
	editor = {Peter van Beek},
	title = {Towards an Optimal {CNF} Encoding of Boolean Cardinality Constraints},
	booktitle = {Proc. {CP}},
	series = {LNCS},
	volume = {3709},
	pages = {827--831},
	publisher = {Springer},
	year = {2005},
}

@Article{2023KociumakaNP_TowarDefinComprMeasurFor,
	author = {Tomasz Kociumaka and Gonzalo Navarro and Nicola Prezza},
	title = {Toward a Definitive Compressibility Measure for Repetitive Sequences},
	journal = {{IEEE} Trans. Inf. Theory},
	volume = {69},
	number = {4},
	pages = {2074--2092},
	year = {2023},
}

@Article{2003Rytter_ApplicOfLempelZivFactor,
	Title = {Application of {Lempel-Ziv} factorization to the approximation of grammar-based compression},
	Author = {Wojciech Rytter},
	Journal = {Theor. Comput. Sci.},
	Year = {2003},
	Number = {1-3},
	Pages = {211-222},
	Volume = {302},
}

\end{document}